\newtheorem{defi}{Definition}[section]
\newtheorem{prop}{Proposition}[section]
\newtheorem{ex}{Example}[section]
\newtheorem{proof}{Proof}
\begin{document}
\title{From Boundary Crossing of Non-Random Functions to Boundary Crossing
of Stochastic Processes}
\begin{center}
\large \textbf{From Boundary Crossing of Non-Random Functions to Boundary Crossing
of Stochastic Processes}\normalsize\\~\\
MARK BROWN\\
Department of Mathematics, The City College of New York\\
New York, NY 10031-9100, USA\\
Email: cybergarf@aol.com\\~\\
VICTOR H. DE LA PE\~{N}A\\
Department of Statistics, Columbia University\\
New York, NY 10027-6902, USA\\
Email: vp@stat.columbia.edu\\~\\
TONY SIT\\
Department of Statistics, Columbia University\\
New York, NY 10027-6902, USA.\\
Email: tony@stat.columbia.edu\\~\\
\end{center}\large
\begin{abstract} 
One problem of wide interest involves estimating expected crossing-times. Several tools have been developed to solve this problem beginning with the works of Wald and the theory of sequential analysis. An extension of his approach is provided by the optional sampling theorem in conjunction with martingale inequalities. Deriving the explicit close form solution for the expected crossing times may be difficult. In this paper, we provide a framework that can be used to estimate expected crossing times of arbitrary stochastic processes. Our key assumption is the knowledge of the average behavior of the supremum of the process. Our results include a universal sharp lower bound on the expected crossing times.\\

Let $\{X_t\}_{t\geq 0}$ be a non-negative, measurable process. Set $\varsigma_t = \sup_{0 \leq s \leq t}X_s, \newline t>0$ and $T_r = \inf\{t > 0: X_t \geq r\}, r>0$. In this paper, we present bounds on the expected hitting time of $X_t$ via a nonrandom function $a(t) := E[\varsigma(t)]$ with $a^{-1}(s) = \infty$ if $a(\infty) < s$. In particular, we derive the following sharp universal lower bound , $E[a(T_r)] \geq r/2$  and if in addition $a(t)$ is concave, we have $E[T_r]\geq a^{-1}(r/2)$. This bound has the optimality property that in the case of non-random continuous processes, the following identity is valid: $T_r = a^{-1}(r)$. Therefore, (up to the constants) the method provides a unified approach to boundary crossing of non-random functions and stochastic processes. Furthermore, for a wide class of time-homogenuous, Markov processes, including Bessel processes, we are able to derive an upper bound $E[a(T_r)] \leq 2r$, which implies that $\sup_{r > 0}\bigg|\frac{E[a(T_r)]-r}{r}\bigg| \leq 1$. This inequality motivates our claim that $a(t)$ can be viewed as a natural clock for all such processes. The cases of multidimensional processes, non-symmetric and random boundaries are handled as well. We also present applications of these bounds on renewal processes in Example 10 and other stochastic processes.
\end{abstract}\normalsize
Keywords: First-hitting time; Threshold-crossing; Probability bounds; Renewal theory\\~\\
2000 Mathematics Subject Classification: 60E15; 60G40; 62L99\\
\section{Introduction}
\label{intro}
One problem of wide interest in the study of stochastic processes involves estimating $E[T_r]$, the expected time at which a process crosses a boundary. Following the work of \cite{delapena97}, we let $X_t$, $t \geq 0$, be a measurable process with the first passage time $T_r = \inf\{t > 0 : X_t \geq r\}$, $r>0$ at the level $r$. Deriving the explicit closed form solution for $E[T_r]$ can sometimes be difficult and so we are interested in finding the possibility of obtaining bounds for $E[T_r]$ above or below by some functions that are related to $a(t) := E[\sup_{0 \leq s\leq t}X_s]$. This approach was introduced in de la Pe\~na \cite{delapena97} (published in Section 2.7 of \cite{delapenagine}) in which the author constructed bounds for $E[T_r]$ for a class of Banach-valued processes with independent increments via decoupling. The bounds derived are of interest in applications where the moments of the maximal process can be readily obtained.\\

The main idea consists of an extension of the concept of boundary crossing by non-random functions to the case of random processes. $a(t)$ can be intuitively interpreted as a natural clock for all processes with the same $a(t)$. Here, we assume that we have information on $a(t)$. Furthermore, we assume that $X_t$ is a measurable separable process, $\sup_{0 \leq s \leq t}X_s$ is, therefore, well-defined and $T_r$ is measurable.\\

We would like to draw an attention to readers that the results derived in this paper provide a decoupling reinterpretation (in the context of boundary crossing) of the results due to Wald that concern randomly stopped sums of independent random variables (see \cite{wald45}), to Burkh\"older and Gundy on randomly stopped processes with independent increments (see \cite{burkholdergundy70}) and to Klass on bounds for randomly stopped Banach space-valued random sums (see \cite{klass88} and \cite{klass90}).\\

Recall that for $\{X_i\}$ a sequence of iid random variables adapted to $\mathcal{F}_i = \sigma(X_1, \ldots, X_i)$ with $E[X_1] = \mu$, where $|\mu| < \infty$, we have
\begin{equation*}
E[S_T] = \mu E[T], 
\end{equation*} 
where $T$ is a stopping time adapted to $\sigma\{X_i\}$, $S_n = \sum^n_{i=1}X_i$ and $E[T] < \infty$. Furthermore, if $E[X_1] = 0$ and $E[X^2_1] < \infty$, then
\begin{equation*}
E[S^2_T] = E[X^2_1]E[T], 
\end{equation*}
whenever $E[T] < \infty$. These celebrated results are known as Wald's first and second equations. As we can observe, $S_T = \sum^\infty_{i=1} X_i I(T \geq i)$ is composed of summands $X_i I(T \geq i)$ that are products of independent variables. This motivate the idea of replacing $\{X_iI(T\geq i)\}$ by $\{\widetilde{X}_iI(T \geq i)\}$, where $\{\widetilde{X}_i\}$ is an independent copy of $\{X_i\}$ and independent of $T$ as well. As a result, the above Wald's equations can be viewed from the decopuling perspective where, if we denote $\widetilde{S}_n = \sum^n_{i=1} \widetilde{X}_i$, then, whenever $E[T] < \infty$ and $E[X^2_1] < \infty$,
\begin{equation*}
E[S_T] = E[\widetilde{S}_T] = E[S_{\widetilde{T}}] \quad\quad\text{and}\quad\quad E[S^2_T] = E[\widetilde{S}^2_T] = E[S^2_{\widetilde{T}}].
\end{equation*}
It is important to realize that the variables $S_T$ and $\widetilde{S}_T$ can have drastically different behavior. One may consider a case in which $X_i$'s are iid random variables with $\Pr\{X_i=1\} = \Pr\{X_i = -1\} = \frac{1}{2}$. Define $T = \inf\{n > 0 : S_n = a \text{~or~} S_n = -b\}$ for some integers $a, b > 0$. It is easy to see that $S_T$ can only take a value of either $a$ or $-b$ whereas $\widetilde{S}_T$'s value is not restricted to these two choices.\\

An extension of Wald's second equation to the case of independent random variables having finite second moments with $T$ as a stopping time defined on the $X$'s was studied by de la Pe\~na and Govindarajulu \cite{delapenagov}. The following bound is sharp: 
\begin{equation}
E[S^2_T] \leq 2E[\widetilde{S}^2_T],
\end{equation}
where $S_n = \sum^n_{i=1}X_i$ and with $\widetilde{S}$ an independent copy of $S$. If we let $T_r = \inf\{n: S^2_n \geq r\}$ and $a(n) = E[\max_{1 \leq j \leq n}S^2_j]$, then
$$
r \leq E[S^2_{T_r}] \leq 2 E[\widetilde{S}^2_{T_r}] \leq 2 E\left[\max_{1 \leq j \leq T_r}\widetilde{S}^2_j\right],
$$
hence
\begin{equation*}
E[a(T_r)] \geq \frac{r}{2}.
\end{equation*}
which is closely related to our main result; see Proposition 1 below.\\

Along this vain, Klass \cite{klass88} obtained results on a best possible improvement of Wald's equation. In his work, Klass obtained bounds for stopped partial sum of independent random elements taking values if Banach space $(\mathcal{B}, \|\cdot\|)$. To be specific, he derived that 
\begin{equation*}
E\left[\max_{1 \leq n \leq T}\Phi(\|S_n\|)\right] \leq 20(18^\alpha)E\left[\max_{1 \leq n \leq T}\Phi(\|\widetilde{S}_n\|)\right],
\end{equation*}
where $\Phi(\cdot): [0, \infty)$ is a nondecreasing function such that $\Phi(0) = 0$ and for $\alpha >0$, $\Phi(cx) \leq c^\alpha\Phi(x)$ for all $c>2, x>0$ . Furthermore, in \cite{klass90}, the corresponding lower bound was also derived and hence
\begin{equation}
c_\alpha E\left[\max_{1 \leq n \leq T}\Phi(\|\widetilde{S}_n\|)\right] \leq E\left[\max_{1\leq n\leq T}\Phi(\|S_n\|)\right] \leq 20(18^\alpha) E\left[\max_{1 \leq n \leq T}\Phi(\|\widetilde{S}_n\|)\right].\label{eq:Klass}
\end{equation}
The corresponding counterpart for processes defined in continuous time domain was developed by \cite{delapenaeisenbaum94} (see also \cite{delapena96}) in which they showed
\begin{equation*}
c_\alpha E\left[\sup_{1 \leq t \leq T}\Phi(\|\widetilde{N}_t\|)\right] \leq E\left[\sup_{1\leq t\leq T}\Phi(\|N_t\|)\right] \leq 20(18^\alpha) E\left[\sup_{1 \leq t \leq T}\Phi(\|\widetilde{N}_t\|)\right],
\end{equation*}
where $N_t, t\geq 0$ is a $\mathcal{B}$-valued process continuous on the right with limits from the left with independent increments with $(\mathcal{B}, \|\cdot\|)$ is a separable Banach space.
As observed above, the inclusion of $a(t)$ facilitates the decopuling between the random stopping time $T$ and the underlying process $X_t$. All these results are closely connected with sequential analysis, for details, readers are referred to Lai \cite{lai01} for a recent survey.\\

In our results, below, we use $E[a(T_r)]$ as a key quantity. Here, we review our interpretation of this quantity. If we define 
\begin{equation*}
M(t) = \sup_{0 \leq s \leq t} X(s),\text{and } X_t \text{~is~ continuous}
\end{equation*}
then, $M(T_r) \equiv r$ with probability one. But, in contrast,
\begin{equation*}
E[a(T_r)] := \int (E[M(t)])dF_{T_r}(t) = \int (E[\widetilde{M}(t)])dF_{T_r}(t),
\end{equation*}
which can be interpreted as $E[a(T_r)] = E[a(\widetilde{T}_r)] = E[M(\widetilde{T}_r)]$, where $\widetilde{T}_r =_d T_r$ with $T_r$ and $\widetilde{T}_r$ independent. Thus it fits into the decoupling theme discussed above. This is further discussed in the remarks following Proposition 1.\\

In this paper, we devlope upon \cite{brownetal11} and \cite{delapenaetal11} to provide a universal sharp bound for $E[a(T_r)] \geq \frac{r}{2}$, which under a concavity assumption on $a(t)$ gives $E[T_r] \geq a^{-1}(\frac{r}{2})$; compare de la Pena and Yang \cite{delapenayang} in which the following bound is presented:
\begin{equation}
E[T_r] \geq (1-\epsilon)a^{-1}(\epsilon r), \quad \epsilon \in (0,1).
\end{equation} In addition, for a wide class of processes, we show that $$a^{-1}(\frac{r}{2}) \leq E[T_r] \leq a^{-1}(2r)$$ as well as the stability property $\displaystyle\sup_{r > 0}\bigg|\frac{E[a(T_r)]-r}{r}\bigg| \leq 1$. The above result, coupled with Eq. (1.1) of \cite{delapenayang}, without the concavitiy assumption on $a$, gives:
\begin{equation*}
\frac{1}{2}a^{-1}(\frac{r}{2}) \leq E[T_r] \leq a^{-1}(2r).
\end{equation*}


It should be noted that, the theory of first passage times for random processes has been extensively developed in recent times. In particular, the distribution of the first hitting times of Brownian motion has been studied through inverse Gaussian distribution; see \cite{sehadri94}. A similar approach to the first hitting times involving Le\'vy processes is also available. Readers may refer to \cite{sato99} for details.  

The typical methods, however, assume full knowledge of the distribution. In contrast, our approach provides bounds for all processes with a common $a(t) = E[\sup_{0 \leq s \leq t}X_s]$ based on the approximate knowledge of moments of the maximal process, or $E[X_t]$. Even in situations when the distribution of the process is known, the quantity $E[T_r]$ might not be easily obtained as shown in Example 7 in which we study the relative growth of the boundary crossing of a three-dimensional Brownian motion and related processes. (Renewal processes)\\ 

The rest of the paper is organized as follows: In section 2, we obtain upper and lower bounds on $E[T_r]$, as well as bounds on $E[a(T_r)]$. Section 3 elaborates some possible extensions of our methodology that can handle siutations in which the expected first hitting time is hard to obtain. An application of the bounds derived are presented in Section 4, followed by Section 5, which summarizes the paper. 

\section{Main results}
With the above definitions, let $a^{-1}(\xi) = \inf\{t > 0 : a(t) \geq \xi\}$ and $a^{-1}(\xi) = \infty$ if $a(t) < \xi, \forall t$. We have the following proposition. 
\begin{prop}
\label{prop:lowerboundaTr}
For all non-negative, measurable process $X_t$ with \newline $E\left[\sup_{0 \leq s \leq t}X_s \right] = a(t)$,
\begin{equation}
\frac{r}{2} \leq E[a(T_r)]
\label{eq:eatrlower}
\end{equation}
and the bound is sharp. Furthermore, if $a(t)$ is assumed to be concave, we obtain
\begin{equation}
a^{-1}(\frac{r}{2}) \leq E[T_r].
\label{eq:etrlower}
\end{equation}
\end{prop}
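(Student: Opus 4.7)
The plan is to prove (\ref{eq:eatrlower}) by a decoupling argument of the kind the introduction motivates, and then to deduce (\ref{eq:etrlower}) by a one-line application of Jensen's inequality. On a product space introduce an independent copy $\{\widetilde X_t\}_{t\geq 0}$ of $\{X_t\}_{t\geq 0}$, and set $\widetilde\varsigma_t = \sup_{0 \leq s \leq t}\widetilde X_s$ and $\widetilde T_r = \inf\{t>0:\widetilde X_t \geq r\}$, so that $(X,T_r)$ and $(\widetilde X, \widetilde T_r)$ are i.i.d. Since the running supremum is non-decreasing in time, on $\{T_r \leq \widetilde T_r\}$ we have $\varsigma_{\widetilde T_r} \geq \varsigma_{T_r} \geq r$, while on the complementary event $\{\widetilde T_r < T_r\}$ the symmetric argument yields $\widetilde\varsigma_{T_r} \geq \widetilde\varsigma_{\widetilde T_r} \geq r$. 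Combining the two cases gives the key pathwise inequality
\[ \varsigma_{\widetilde T_r} + \widetilde\varsigma_{T_r} \;\geq\; r \qquad \text{a.s.} \]

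Now take expectations on both sides. Because $\widetilde T_r$ is independent of the path $X$, conditioning on $\widetilde T_r$ and then applying Fubini gives
\[ E\bigl[\varsigma_{\widetilde T_r}\bigr] \;=\; E\bigl[a(\widetilde T_r)\bigr] \;=\; E[a(T_r)], \]
where the last equality uses $\widetilde T_r \stackrel{d}{=} T_r$. A symmetric computation yields $E[\widetilde\varsigma_{T_r}] = E[a(T_r)]$, and summing the two shows that $2E[a(T_r)] \geq r$, which is (\ref{eq:eatrlower}). For sharpness I would exhibit the degenerate process $X_t = r\,\mathbf{1}_{\{t \geq \tau\}}$ with $\tau$ having any continuous distribution: then $a(t) = rF_\tau(t)$, $T_r = \tau$, and $a(T_r) = r F_\tau(\tau)$ is uniformly distributed on $[0,r]$, whence $E[a(T_r)] = r/2$ exactly.

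For (\ref{eq:etrlower}), observe that $a$ is non-decreasing because $\varsigma_t$ is non-decreasing in $t$, so the generalized inverse $a^{-1}$ defined in the statement is monotone. Assuming $E[T_r]<\infty$ (otherwise (\ref{eq:etrlower}) is immediate under the stated convention), Jensen's inequality applied to the concave function $a$ gives
\[ a\bigl(E[T_r]\bigr) \;\geq\; E[a(T_r)] \;\geq\; r/2, \]
and applying $a^{-1}$ to both sides yields $E[T_r] \geq a^{-1}(r/2)$.

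The main obstacle is not conceptual but rather measure-theoretic bookkeeping: one must allow $T_r$ or $\widetilde T_r$ to take the value $+\infty$, extend $\varsigma$ to $\varsigma_\infty := \sup_{t\geq 0}X_t$, and verify that the pathwise inequality, the Fubini step, and the final inversion all remain consistent with the convention $a^{-1}(s) = \infty$ for $s>a(\infty)$. A related but minor point is checking the joint measurability required to construct the decoupled pair on a product space, which is guaranteed by the measurability and separability hypotheses stated in the introduction.
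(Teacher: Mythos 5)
Your argument is correct and is essentially the paper's proof in a symmetrized, pathwise form: unpacking your inequality $\varsigma_{\widetilde T_r}+\widetilde\varsigma_{T_r}\geq r$ via Markov's inequality and Fubini gives exactly the paper's chain $E[a(T_r)]\geq rE[F_{T_r}(T_r)]=r\Pr(\widetilde T_r\leq T_r)\geq r/2$, and your sharpness example and Jensen step coincide with the paper's. The only caveat (shared with the paper, which implicitly assumes $\{T_r\leq t\}\subseteq\{\sup_{0\leq s\leq t}X_s\geq r\}$) is that $\varsigma_{T_r}\geq r$ requires the supremum to actually reach $r$ by time $T_r$, which holds for right-continuous paths but is part of the measure-theoretic bookkeeping you correctly flag.
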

A decoupling reinterpretation of \eqref{eq:eatrlower} is given as follows:
Assuming that $X_t$ is continuous, we have
$$r = E\left[\sup_{0 \leq s \leq T_r}X_s\right] = E[M(T_r)],$$ we can rewrite \eqref{eq:eatrlower} as
\begin{equation*}
\frac{r}{2} = \frac{1}{2}E\left[\sup_{0 \leq s \leq T_r}X_s\right] \leq E\left[\sup_{0 \leq s \leq T_r}\widetilde{X}_s\right],
\end{equation*}
in which case the underlying process and its random stopping time are independent (decopuled).
\begin{proof}\label{pf:one}
Let $F$ be the cdf of $T_r$. If $F$ is continuous, then
\begin{align*}
a(t) & =  E\left[\sup_{0 \leq s \leq t} X(s)\right]\\
     & \geq  r \Pr\left\{\sup_{0 \leq s \leq t} X(s) \geq r\right\}\\
     & =  r\Pr\{T_r \leq t\} = rF(t).
\end{align*}
Due to the continuity of $F$, it is easy to see that $F(T_r) \sim \mathcal{U}(0,1)$ and $E[F_{T_r}(T_r)] = \frac{1}{2}$. As a result, we can conclude that $E[a(T_r)] \geq \frac{r}{2}$. More generally if the distribution of $T_r$ is not necessarily continuous then since
\begin{align*}
E[F(T_r)] & =  \Pr\left(\widetilde{T}_r \leq T_r\right)\\
          & =  \Pr\left(\widetilde{T}_r = T_r \right) + \frac{1}{2}\Pr\left(\widetilde{T}_r \neq T_r\right)\\
          & =  \frac{1}{2}\left[1 + \Pr\left(T_r = \widetilde{T}_r\right)\right] \geq \frac{1}{2},
\end{align*}
it follows that $E[a(T_r)] = rE[F_{T_r}(T_r)] \geq r/2$. To prove the sharpness of the bound, consider
\begin{equation*}
X_t = rI(t \geq U),
\end{equation*}
where $U \sim \mathcal{U}(0,1)$, then $T_r =_d U$ and $E[a(T_r)] = E[rU] = \frac{r}{2}$. To prove \eqref{eq:etrlower}, it follows immediate by Jensen's inequality that
\begin{align*}
-a(E[T_r]) & \leq -E[a(T_r)] \leq -\frac{r}{2}\\
\Longrightarrow~~~~~~~~~ E[T_r] & \geq  a^{-1}\left(\frac{r}{2}\right).
\end{align*}
\end{proof}

\noindent One may be interested in obtaining an upper bound for $E[a(T_r)]$ or $E[T_r]$ without further assumption. This is, however, impossible. The reason is that, without assumptions that control the growth of the process $X_t$ for any $t > T_r$, the value of $a(t)$ can blow up. Below we introduce a counter-example that demonstrates the impossibility that an upper bound can be obtained without further assumption.\\

\begin{ex} 
\label{eg:exp}
Let $X_t = tY,$ with $Y$ a non-negative random variable. Then 
\begin{equation*}a(t) = tE[Y]\end{equation*} and 
\begin{equation*}E[T_r] = rE[Y^{-1}].\end{equation*}

Suppose $Y$ is exponentially distributed with mean 1, $E[Y]=1$ while $E[Y^{-1}] = \infty$, so the behavior of $a(t)$ is controlled; however, $E[a(T_r)] = rE[Y]E[Y^{-1}] = rE[Y^{-1}]$ can be arbitrarily large. Controlling the growth of the proces after $T_r$ is reached enables us to derive an upper bound as shown in Proposition \ref{prop:upperboundaTr}. 
\end{ex}
\begin{defi}
A real random variable $A$ is less than a random variable $B$ in the ``usual stochastic order'' if
\begin{equation*}
\Pr(A>x) \le \Pr(B>x)\text{ for all }x \in (-\infty,\infty),
\end{equation*}
which is denoted $A \le_{st} B$; see \cite{rogerwilliams} and \cite{ross96}.
\end{defi}
\begin{prop}
\label{prop:upperboundaTr}
Assume that $X_t$ is non-negative and continuous, and in addition $X_t$ is a time homogeneous Markov process and that $T_{kr} - T_{(k-1)r} \geq_{st} T_r$, for $k \geq 2$, then
\begin{equation}
\label{eq:upperboundEaTr}
E[a(T_r)] \leq 2r 
\end{equation}
and
\begin{equation}
E[T_r] \leq a^{-1}(2r).
\label{eq:etrupper}
\end{equation}
\end{prop}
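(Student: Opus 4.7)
My plan is to combine the decoupling identity $E[a(T_r)] = E[M(\widetilde T_r)]$, where $M(t) = \sup_{0 \leq s \leq t} X_s$ and $\widetilde T_r$ is an independent copy of $T_r$, with a level-crossing / renewal argument powered by the time-homogeneous Markov hypothesis. Since $X$ is continuous and nonnegative, the counting process $N(t) := |\{k \geq 1 : T_{kr} \leq t\}|$ satisfies $N(t)\,r \leq M(t) < (N(t)+1)\,r$, so $a(t) \leq r(1 + E[N(t)])$. The strong Markov property applied at $T_{(k-1)r}$, together with the hypothesis $T_{kr} - T_{(k-1)r} \geq_{st} T_r$, inductively gives a coupling $T_{kr} \geq S^*_k := T_r^{(1)} + \cdots + T_r^{(k)}$ where the $T_r^{(i)}$ are iid copies of $T_r$. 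Hence $E[N(t)] \leq m^*(t)$, the renewal function of the iid-$T_r$ renewal process, and substituting $t = T_r$ while decoupling the $\{T_r^{(i)}\}$ from $T_r$ yields
\[
E[a(T_r)] \leq r + r\, E[m^*(\widetilde T_r)],
\]
so \eqref{eq:upperboundEaTr} reduces to showing $E[m^*(\widetilde T_r)] \leq 1$.

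For this last bound I would apply Wald's identity at the stopping index $N^*(\widetilde T_r)+1$ in the iid-$T_r$ renewal process to obtain $E[m^*(\widetilde T_r)] = E[R]/E[T_r]$, where $R$ is the residual lifetime of the renewal at the independent instant $\widetilde T_r$. The Markov continuity of $X$ together with the level-to-level stochastic ordering confer on $T_r$ the new-better-than-used (NBU) property $\Pr(T_r > s + t) \leq \Pr(T_r > s)\Pr(T_r > t)$, which forces the residual lifetime to be stochastically dominated by a fresh $T_r^{(i)}$; hence $E[R] \leq E[T_r]$ and $E[m^*(\widetilde T_r)] \leq 1$. The second conclusion $E[T_r] \leq a^{-1}(2r)$ then follows from $E[a(T_r)] \leq 2r$ by monotonicity of $a$ together with a Jensen-type step analogous to the one used in Proposition~\ref{prop:lowerboundaTr}.

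The delicate point is the renewal inequality $E[m^*(\widetilde T_r)] \leq 1$: without NBU-type control of $T_r$ one can exhibit lifetime distributions (necessarily not arising as hitting times of continuous Markov processes satisfying the stated assumptions) for which the bound fails. The real work therefore lies in extracting NBU—or, equivalently, a uniform bound on the expected residual lifetime—from the Markov hypothesis combined with the stochastic ordering $T_{kr} - T_{(k-1)r} \geq_{st} T_r$ on the consecutive level-crossing increments.
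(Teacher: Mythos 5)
Your overall strategy is essentially the paper's: sandwich $M(t)$ between $rN(t)$ and $r(N(t)+1)$ where $N$ counts successive crossings of levels $kr$, dominate $N$ by the renewal process built from iid copies of $T_r$ using the hypothesis $T_{kr}-T_{(k-1)r}\geq_{st}T_r$ together with the (strong) Markov property, and then bound the renewal function $m^*(t)$ by $t/\mu$ with $\mu=E[T_r]$. The only real difference is the mechanism for that last bound: the paper compares the ordinary renewal process with the stationary renewal process, using that the equilibrium distribution $G$ satisfies $G\leq_{st}F_{T_r}$ when $T_r$ is NBU, whereas you use Wald's identity $\mu\bigl(m^*(t)+1\bigr)=t+E[R(t)]$ together with the standard fact that NBU interarrivals force the residual life to satisfy $R(t)\leq_{st}T_r$. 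These are equivalent packagings of the same renewal-theoretic fact, and both rest on the NBU property of $T_r$.

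Two points need attention. First, you explicitly defer the proof that $T_r$ is NBU, and that is exactly where the Markov hypothesis does its work; the paper's argument is short and you should supply it (or an equivalent): conditioning on $\{T_r>x,\,X_x=y\}$, time-homogeneity and the Markov property give that the remaining time is distributed as $T_r-T_y$, and continuity of paths gives $T_r=T_y+(T_r-T_y)$, so $T_r-T_y\leq_{st}T_r$; hence $\Pr(T_r>x+y)\leq\Pr(T_r>x)\Pr(T_r>y)$. As written, your proposal names this lemma but does not prove it. Second, your route to $E[T_r]\leq a^{-1}(2r)$ via ``a Jensen-type step analogous to Proposition~\ref{prop:lowerboundaTr}'' fails: to pass from $E[a(T_r)]\leq 2r$ to $a(E[T_r])\leq 2r$ you would need $a(E[T_r])\leq E[a(T_r)]$, i.e.\ convexity of $a$, which is not assumed (and $a$ is typically concave, which gives the inequality in the wrong direction). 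The fix is already inside your argument: since $R(t)\leq_{st}T_r$ holds for every fixed $t$, Wald gives the \emph{pointwise} bound $m^*(t)\leq t/\mu$, hence $a(t)\leq r(1+t/\mu)$ for all $t$; evaluating at $t=E[T_r]$ yields $a(E[T_r])\leq 2r$ directly, and the definition of $a^{-1}$ then gives $E[T_r]\leq a^{-1}(2r)$. This pointwise substitution is precisely how the paper derives the second inequality.
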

\begin{proof}
First notice that if $X_t$ is a Markov process with continuous paths - irreducible state space $[0,A]$, $0 < A \leq \infty$. 
If $r<A$, then $Pr\{T_r < \infty\} = 1$. Observe that, because 
of the continuous paths, we have 
\begin{equation*}
T_r = T_s + (T_r - T_s), s<r,
\end{equation*}
thus $T_r$ is stochastically greater than $T_s$ and $T_r-T_s$. Now
\begin{equation*}
Pr\{T_r > x+y\} = Pr\{T_r > x\} Pr\{T_r - x > y | T_r > x\}
\end{equation*}
and 
\begin{equation*}
T_r - x | T_r > x, X(x) = y \sim T_r - T_y,
\end{equation*}
by the Markov property, and since $T_r - T_z \leq_{st} T_r$, for $0 \leq z < t$, $Pr\{T_r - x > y | T_r > x\} \leq Pr\{T_r > y\},$ so that, if we define $\bar{F}(\cdot) = 1-F(\cdot)$,
\begin{equation}
\label{eq:NBU}
\bar{F}_{T_r}(x+y) \leq \bar{F}_{T_r}(x)\bar{F}_{T_r}(y),
\end{equation}
which is the submultiplicative, or new better than used (NBU), property. For details about NBU property, see \cite{barlowproschan75} and \cite{brown06}. Since $T_r$ is NBU, it has a finite mean. \\

Define $\bar{G}(t) := \frac{1}{\mu}\int^\infty_{t}\bar{F}_{T_r}(x)dx$, where $\mu = E[T_r]$, the stationary renewal distribution corresponding to $T_r$, since
\begin{align*}
\bar{G}(t) & =   \frac{\bar{F}_{T_r}(t)}{\mu}\int^\infty_t \frac{\bar{F}_{T_r}(x)}{\bar{F}_{T_r}(t)}dx\\
           &\leq \frac{\bar{F}_{T_r}(t)}{\mu}\int^\infty_t \bar{F}_{T_r}(x-t)dx\\
           & =  \frac{\bar{F}_{T_r}(t)}{\mu}\mu = \bar{F}_{T_r}(t),
\end{align*}
it follows that $G \leq_{st} F_{T_r}$.\\

The stationary renewal distribution corresponding to $F_{T_r}$ has $X^*_1 \sim G$, and $\{X_i\}_{i\geq 1} \sim F_{T_r}$. 
It satisfies,
\begin{equation*}
M^*(t) = E[N^*(t)] = E[\text{\# of renewals in }[0,t]] = \frac{t}{\mu}.
\end{equation*}~\\
An ordinary renewal process has $X_1 \sim F$ with $F \geq_{st} G$, since $F$ is NBU. It follows that
\begin{align*}
M(t) & =  E[N(t)]\\
     & =  E[\text{\# of renewals in } [0,t] \text{ for ordinary renewal process}]\\
     & \leq  E[N^*(t)] = \frac{t}{\mu},
\end{align*}
and hence
\begin{equation*}
E[M(T_r)] \leq E[\mu^{-1}T_r] = 1.
\end{equation*}
~\\Under the assumption that $T_{kr} - T_{(k-1)r} \geq_{st} T_r$, then
$$
\widetilde{N}_t r \leq \max_{0 \leq s \leq t} X_s \leq (\widetilde{N}_t+1)r,
$$
where $\widetilde{N}_t = \max\{k: T_{kr}\leq t\}$, i.e. the number of renewals prior to time $t$. It follows that
\begin{align}
a(t) & \leq  rE[N_t+1]\nonumber\\
     &   =   (M_t + 1)r\nonumber\\
     & \leq  \left(\frac{t}{\mu}+1\right)r \label{eq:atupper}
\end{align} and thus
\begin{equation*}
E[a(T_r)] \leq \left(\frac{E[T_r]}{\mu}+1\right)r \leq 2r.
\end{equation*}
By plugging in, specifically, $t = E[T_r]$ into \eqref{eq:atupper}, we yield
\begin{equation}
E[T_r] \leq a^{-1}\left(\frac{E[T_r]}{\mu} +1 \right)r = a^{-1}(2r),
\end{equation}
which completes the proof. 
\end{proof}
As mentioned, eq. \eqref{eq:etrupper}, coupled with Eq. (1.1) of \cite{delapenayang}:
\begin{equation}
\label{eq:rightbound}
\frac{1}{2}a^{-1}(\frac{r}{2}) \leq E[T_r] \leq a^{-1}(2r).
\end{equation}
By assuming that $a(t)$ is concave, the lower bound can be improved to $a^{-1}\left(\frac{r}{2}\right)$, which is sharp. If it is further assumed that the conditions specified in Proposition 2.2 hold, we obtain the following bounds:
\begin{equation}
\label{eq:mainresult2}
a^{-1}(\frac{r}{2}) \leq E[T_r] \leq a^{-1}(2r),
\end{equation}
hence gives the right order of magnitude of the expected value of the first hitting time. In fact, \cite{yang} used this approach to obtain bounds on $E[T_r]$ for additive processes including a certain class of stochastic integrals extending the works of Burkh\"older and Gundy; see \cite{burkholdergundy70}.\\


Readers may compare the results obtained in Propositions 2.1 and 2.2 to Theorem 3 of \cite{delapenayang} whose lower bound suggested: $E[T_r] \geq (1-\epsilon) a^{-1}(\epsilon r), 0 < \epsilon < 1$, is sharpened by our new lower bound derived in the case of concave functions. The results as shown in \eqref{eq:rightbound} and \eqref{eq:mainresult2} provide the values of the constants that appear in the bounds as shown in \cite{yang}, which provides examples of the approach applied to stochastic integrals.\\

It can be easily derived from \eqref{eq:eatrlower} and \eqref{eq:upperboundEaTr} that
\begin{equation*}
-\frac{1}{2} \leq \frac{E[a(T_r)]-r}{r} \leq 1,
\end{equation*}
shows stability of $E[a(T_r)]$ as $r$ changes and the linearization property of $a(t)$. In addition, this shows the linearizing property of $a(t)$ as can be easily seen that by ~\\
\begin{equation}
\sup_{r > 0} \bigg|\frac{E[a(T_r)]-r}{r}\bigg| =  \sup_{r > 0} \bigg|\frac{E[\sup_{0 \leq s \leq T_r}\widetilde{X}_s]-E[\sup_{0 \leq s \leq T_r} X_s]}{r}\bigg|\leq 1. \label{eq:rate}
\end{equation}~\\

Equation \eqref{eq:rate} may explain the claim that $a(t)$ can be interpreted as a natural clock for all the processes with the same $a(t)$ since through $a(t)$, $E[a(T_r)]$ is in a linear relationship with the predefined boundary $r$.\\ 

The above framework provides a broad foundation for more general applications. A rich array of examples are given in \cite{delapenayang}. In particular, the above results can be extended easily to $\mathbf{X} \in \mathbb{R}^d$. We will provide the following examples for illustration. 
\begin{ex}
Suppose that we are interested in the first hitting time of the process $X_t$ that hits either a lower bound $a$ or an upper bound $b$, where $a < 0 < b$. We may define
\begin{equation*}
f_{a,b}(x) = \begin{cases} x/a, & x < 0\\x/b, & x\geq 0, \end{cases}
\end{equation*}
(see Figure 1) and hence
\begin{equation*}
T_{a,b} = \inf\{t > 0: X_t \notin [a,b]\} = \inf\{t > 0 : f_{a,b}(X_t)>1\}
\end{equation*}
and
\begin{equation*}
a(t) = E\left[\sup_{0 \leq s \leq t} f_{a,b}(X_s)\right] = E\left[\sup_{0 \leq s \leq t} I(X_s<0)a^{-1}X_s+I(X_s \geq 0)b^{-1}X_s \right].
\end{equation*}

\begin{figure}[!h]
\begin{center}
  \includegraphics[scale=0.5]{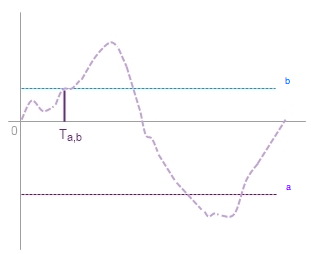}
  \caption{Illustration of asymmetric bounds for general (not necessarily non-negative) processes.} \label{fig:illustration}
\end{center}
\end{figure}
\end{ex}
More generally, this can be further generalized into cases where the first hitting time is defined as $T_r=\inf\{t \geq 0: f_{c_1, c_2, \ldots, c_m}(\boldsymbol X_t) > r\}$. Below shown one specical case that can be handled under this framework: 
One can also take for example, (a) $f(x) = x^+, x \in \mathbb{R}$; (b) $f(x,y) = |x-y|, x, y \in \mathbb{R}^d$; (c) $f(x, y) = \rho(x-y)$ for a metric $\rho$; (d) $f(x,y) = e^{-|x-y|}, x, y \in \mathbb{R}^d$ and so on for appropriate applications. Moreover, the boundary itself can be not fixed. 
\begin{ex}
Suppose that $g$ is a deterministic or stochastic process. Define $Y_t = X_t/g_t$ and $T_r = \inf \{t: Y_t > r\}$, where $X$ is similarly defined as in Example 3. This corresponds to the hitting time of the process $X_t$ reaching $g_t$ and $a(t) = E[\sup_{s\leq t}X_s/g_s]$.
\end{ex}

\section{Some extensions}
Notice that, under the concave assumption on $a(\cdot)$, we can derive a similar lower bound to the one derived in Section 2 that is more readily available. Observe that
\begin{equation*}
a(t) = E\left[\sup_{0 \leq s \leq t}X_s\right] \geq \sup_{0 \leq s \leq t}E[X_s] := \kappa(t).
\end{equation*}
It follows that $a^{-1}(r) \leq \kappa^{-1}(r)$. Thus, if the conditions hold for the upper bound, then 
\begin{equation}
E[T_r] \leq a^{-1}(2r) \leq \kappa^{-1}(2r),
\end{equation}
where $\kappa(\cdot)$ may be obtained more easily, compared to $a(\cdot)$.
\begin{ex}
Consider the absolute value of a standard Brownian motion $|W_t|$. It can be shown that 
\begin{equation*}
\kappa(t) = E|W_t| = \sqrt{2t/\pi}
\end{equation*}
and it is known that $E[\sup_{0 \leq s \leq t}W_s] = E|W_t|$, but $a(t) = E[\sup_{0\leq s\leq t}|W_s|]$ appears difficult to compute. In fact, it equals
\begin{equation*}
c(t)\sqrt{2t/\pi},
\end{equation*}
where $1 < c(t) < 2$, a multiple depending on $t$. It should be noted that the conditions for the upper bound to hold are easily verifable in this case. Thus
\begin{equation*}
E[T_r] \leq \kappa^{-1}(2r) = 2\pi r^2,
\end{equation*}
while it is known that $E[T_r] = r^2$.
\end{ex}
If more information about the process $X_t$ is known, we can obtain a relaxed lower bound that can be easily expressed in a more manageable form. Suppose $X_t$ is a submartingale with right continuous paths, it is well known that
\begin{equation*}
\Pr\left\{\sup_{0 \leq s \leq t}X_s \geq r\right\} \leq \frac{E[X^+_t]}{r} := \frac{\eta(t)}{r}.
\end{equation*}
Therefore, it follows that $\Pr\{T_r \leq t\} = \Pr\{\sup_{0 \leq s \leq t}X_s \geq r\} \leq \eta(t)/r$, which leads to
\begin{align}
\frac{E[\eta(T_r)]}{r} & \geq  \int \Pr\{T_r \leq t\}dF_{T_r}(t) \geq \frac{1}{2}  \nonumber\\
\Longrightarrow \quad\quad\quad E[\eta(T_r)] & \geq  \frac{r}{2}, \label{eq:modre}
\end{align}
which is a stonger result compared with Proposition 2.1.
\begin{ex}
Consider $W^2_t$, where $W_t$ is a standard Brownian motion. Denote $T_r$ the first passage time of $|W_t|$ to $r$, (so for $W^2_t$ the first passage time to $r^2$). Applying the result of \eqref{eq:modre}, we get
\begin{equation*}
E[T_r] \geq \frac{r^2}{2}.
\end{equation*}
Note that the actual value of $E[T_r]$ is $r^2$ in this case. Here, $\eta(t) = E[X^2_t] = t$.
\end{ex}
The usefulness of \eqref{eq:modre} can be demonstrated in the following example in which a closed form expression of $E[T_r]$ is difficult to obtain.
\begin{ex}
Consider a submartingale
\begin{equation*}
X_t = |W_t^2-t|,
\end{equation*}
where $W_t$ is a standard Brownian motion. It can be shown that 
\begin{equation*}
\eta(t) = E|W^2_t - t| = tE|W^2_1 - 1| = \sqrt{\frac{8}{\pi e}}t.
\end{equation*}
Then it follows that
\begin{equation*}
\frac{r}{2} \leq E\left[\eta(T_r)\right] = \sqrt{\frac{8}{\pi e}}E[T_r],
\end{equation*}
so that 
\begin{equation*}
E[T_r] \geq \frac{r}{2}\sqrt{\frac{\pi e}{8}} \approx 0.5116r.
\end{equation*}
In this exmple, the mean of $T_r$, the first passage time of $|W^2_t - t|$ to $r$, is difficult to compute.
\end{ex}
Before ending this section, we would like to point out that the finiteness of the mean of $T_r$ is important because this issue can cause problems in other applications. A good illustration is demonstrated in the following example.
\begin{ex}
Consider
\begin{equation*}
W^+_t = \max\{0, B_t\},
\end{equation*}
where, again, $B_t$ is a standard Brownian motion. The first passage time of $W^+_t$ to $r > 0$ coincides with the first passage time of $W_t$ to $r$ which has an infinite mean. $W^+_t$ is not a Markov process. Hence, results of Proposition 2 do not apply to $W^+_t$ case. $T_r$ is finite with probability 1, but this is not NBU; see definition \eqref{eq:NBU}.\\

Since $W_{s+t} \sim W_s + (W_{s+t}-W_t) = W_s + \widetilde{W}_t$, where $\widetilde{W}_t$ is an independent Brownian motion, 
\begin{equation*}
W^+_{T_r + t} \sim (W_{T_r}+\widetilde{W}_t)^+ = (r + \widetilde{W}_t)^+ \leq r + \widetilde{W}_t^+.
\end{equation*}

When $\widetilde{W}^+_t$ hits $r$, $\widetilde{W}_t^++r$ hits $2r$ but $W^+_{T_r + t}$ may not have hit $2r$. Thus $T_{2r} \geq_{st} T_r + \widetilde{T}_r$ and $T_{2r}-T_r \geq T_r$. More generally, $T_{(k+1)r}-T_{kr} \geq_{st} T_r$. Thus, following our previous argument, we have

\begin{equation*}
rN_t \leq \max_{0 \leq s \leq t}\widetilde{W}_s \leq r(N_t + 1),
\end{equation*}
but in this case, $\{N_t\}$ is a renewal process with an infinite mean interarrival time. There is no stationary distribution $G$ (of $T_r$) and \eqref{eq:atupper} does not hold. In this case, $E[a(T_r)] = E[\sqrt{2/\pi}\sqrt{T_r}] = \infty$. This shows that upper bounding $E[a(T_r)]$ is challenging to deal with in general.
\end{ex}
\section{The rate of growth of the maximum of Bessel processes}
This section is dedicated to the discussion of the rate of growth of the maximum of Bessel processes which can be obtained via the inequalities obtained in Section 2 (and 3). We are going to consider a case in which the hitting time of the radius/surface area/volume of the  largest multi-dimensional Brownian motions that hits a predefined boundary.

\begin{ex}
Let $\mathbf{p_i} = (p^{(1)}_i, p^{(2)}_i, p^{(3)}_i), i = 1, \ldots, m$, be a set of $m$ spheres that perhaps represent some identified tumours in a human body, i.e. a three-dimensional space. At time $t = 0$, we start a three-dimension Brownian motion with coordinates processes centered at each one of these points. For each $i$ a sphere of radius $r_i(t)$ is given, where the radius equals the distance of the location of the three dimensional Brownian motion at time $t$ to its starting point $\mathbf{p}_i$ We are interested in getting qualitative information on how long it will take before the radius (volume) of at least one of the spheres exceeds a fixed level (say $r$, $r>0$) as the size of $m$ varies.

\begin{figure}[!h]
\begin{center}
  \includegraphics[scale=0.5]{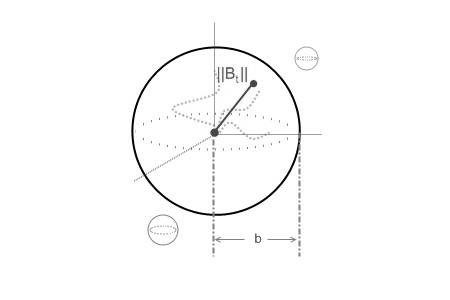}
  \caption{Illustration of Examples 5 and 6.}
\end{center}
\end{figure}

Let $\mathcal{X} = \{\mathbf{X}_i(t) = (X^{(1)}_i(t), X^{(2)}_i(t), X^{(3)}_i(t))\}_{i=1, 2, \ldots, m}$ where, for each $i$, $\mathbf{X}_i(t) = \mathbf{p}_i + \mathbf{B}_i(t)$ corresponds to the location of the three-dimensional Brownian motion $\mathbf{B}_{i}(t)$ started at point $\mathbf{p}_i$ and $\mathcal{B}(t) = \{(B^{(1)}_i(t), B^{(2)}_i(t), B^{(3)}_i(t))\}\newline_{i = 1, \ldots, m}$. The radius of the largest sphere is given by $\|\mathcal{B}(t)\| = \newline\sup_{j\leq m}\sqrt{\sum^3_{i=1}(B^{(i)}_{j}(t))^2}$.\\

Observe that for each $j$, $Y_j(t) = \sum^3_{i=1}(B^{(i)}_j(t))^2$ is a submartingale, since
\begin{equation*}
E\left[Y_j(t) | \mathcal{F}_s\right] = \sum^3_{i=1}E\left[(B_j^{(i)}(t))^2\big|\mathcal{F}_s\right] \geq \sum^3_{i=1}(B^{(i)}_j(s))^2 = Y_j(s).
\end{equation*}
It follows that $Y_t := \max_{i \leq j \leq m}Y_j(t)$ is a submartingale because
\begin{align*}
E[Y_t | \mathcal{F}_s] & =  E[\max\{Y_1(t), \ldots, Y_d(t) | \mathcal{F}_s\}]\\
                       & \geq  \max\{E[Y_j(t) |\mathcal{F}_s], j = 1, \ldots, d\}\\
                       & \geq \max(Y_j(s), j = 1, \ldots, m) = Y_s.
\end{align*}
Let $T_{r,d}$ be the first passage time of $\|B(t)\| = \sup_{j \leq m}\sqrt{\sum^3_{i=1}(B^{(i)}_j(t))^2}$ to $r$, equivalently the first passage time of $Y_t = \|B_t\|^2$ to $r^2$. Then
\begin{align*}
\Pr\{T_{r,d} \leq t\} & = \Pr\left\{\sup_{s \leq t}Y_s \geq r^2\right\}\\
                      & \leq \frac{E[Y_t]}{r^2} = \frac{t}{r^2}E\left[\max_{1 \leq i \leq m}\chi^2_{3,i}\right]. 
\end{align*} 
Therefore, we can write
\begin{equation}
\frac{1}{2} \leq \frac{E\left[\max_{1 \leq i \leq d}\chi^2_{3,i}\right]}{r^2} E[T_{r,d}] \quad \Longrightarrow \quad E[T_{r,d}] \geq \frac{r^2}{2E\left[\max_{1 \leq i \leq d}\chi^2_{3,i}\right]}. \label{eq:lowerbessel}
\end{equation}
It can be shown that $E\left[\max_{1 \leq i \leq d}\chi^2_{3,i}\right] \leq 3\sum^d_{i=1} i^{-1}$ and hence \eqref{eq:lowerbessel} can be rewritten as
\begin{equation}
E[T_{r,d}] \geq \frac{r^2}{6\sum^d_{i=1}i^{-1}},
\end{equation}
which follows from the Corollary on Page 266 of \cite{barlowetall72}. In fact, we can also approximate the values of $E[\max_{1 \leq i \leq d}\chi^2_{3,i}]$ via simulations. The corresponding values for various $d$ are tabulated in Table \ref{table:sim}:

\begin{table}[h]
\begin{center}
\begin{tabular}{ |c|cccccc| }
\hline d & 1 & 2 & 3 & 4 & 5 & 10 \\\hline
$E\left[\sqrt{\max_{1 \leq i \leq d}\chi^2_{3,i}}\right]$ & 1.599 & 1.979 & 2.173 & 2.324 &  2.413 & 2.720 \\\hline\hline
d & 15 & 20 & 30 & 40 & 50 & 100\\\hline
$E\left[\sqrt{\max_{1 \leq i \leq d}\chi^2_{3,i}}\right]$ & 2.875 & 2.987 & 3.132 & 3.237 & 3.307 & 3.527\\\hline
\end{tabular}\\
\begin{tabular}{c}
\\
\end{tabular}
\caption{Approximation of $E\left[\sqrt{\max_{1 \leq i \leq d}\chi^2_{3,i}}\right]$ via 10,000 simulations.}
\label{table:sim}
\end{center}
\end{table}

\end{ex}
In general, for $\{X_i(t), t \geq 0, \}_{i = 1, \ldots, m}$ be iid as $\{X(t), t \geq 0\}$ a submartingale. Let $T^{(r)}_i$ be the first passage time to $r$ for $\{X_i(t), t\geq 0\}$, then $\min_{1 \leq i \leq m}T^{(r)}_i$ is the first passage time to $r$ for $\max_{1 \leq i \leq m}X_i(t)$. Let $\{X(t), t\geq 0\}$ be independent of $\{X_i(t), t \geq 0\}_{i = 1,\ldots, m}$ with $T_r$, its first passage time to $r$. Note that
\begin{equation*}
\int\Pr\{T_r \leq t\}dF_{\min\{T^{(r)}_1, \ldots, T^{(r)}_d\}}(t) = \Pr\left\{T_r \leq \min\{T^{(r)}_1, \ldots, T^{(r)}_d\}\right\} \geq (d+1)^{-1}.
\end{equation*}
Since 
\begin{equation*}
\Pr\{T_r \leq t\} \leq \frac{E[X(t)]}{r} = \frac{\kappa(t)}{r},
\end{equation*}
so 
\begin{align*}
(d+1)^{-1} &\leq  \int\Pr\{T_r \leq t\}dF_{\min\{T^{(r)}_1, \ldots, \min\{T^{(r)}_d\}}(t) \\
           &\leq  \frac{E[\kappa(\min\{T^{(r)}_1, \ldots, T^{(r)}_d\})]}{r}.
\end{align*}
As a result, $E[\kappa(\min\{T^{(r)}_1, \ldots, T^{(r)}_d\})] \geq \frac{r}{d+1}$ For $\|X_t\|$, we use $\|B_t^2\|$ and $r^2$ instead. Recall that $\kappa(t) = E\left[\sum^3_{i=1}(B^{(i)}_j(t))^2\right] = 3t.$ This gives 
\begin{equation*}
E\left[\min_{1 \leq j \leq d}T_i\right] \geq \frac{r^2}{3(d+1)}.
\end{equation*}
This result is not as good as what we can obtain in Example 3.4 in which $E[T_{r,d}] \geq \frac{r^2}{6\sum^d_{i=1}i^{-1}}$. But there we used the result that $\chi^2_3$ is IFR (increasing failure rate); in other examples, we might know the type of distribution that $T^{(r)}_i$ follows.\\

The following example studies the upper bound of the Bessel process discussed in Example 3.4.
\begin{ex}
(Example 5 of a non-Markov process for which the upper bound holds) Consider the Bessel process studied in Example 5 again. $\|B_s\|$ is known to be strongly Markov but $\max_{1\leq j \leq d}\|B^{(j)}_t\|$ may not be. Note that 
\begin{equation*}
B_{t+s} = B_s + (B_{t+s} - B_s) =_d B_s + \widetilde{B}_t,
\end{equation*}
where $\{\widetilde{B}_t, t \geq 0\}$ is distributed as $\{B_t, t \geq 0\}$ and $B$ is independent of $\widetilde{B}$. It follows that
\begin{equation*}
\|B_{t+s}\| =_d \|B_s + \widetilde{B}_t\| \leq_{st} \|B_s\| + \|\widetilde{B}_t\|,
\end{equation*}
and hence 
\begin{equation*}
\|B_{t+s}\| - \|B_s\| \leq_{st} \|\widetilde{B}_t\|, \quad t \geq 0,
\end{equation*}
independently of $\mathcal{F}_s$, the history accumlated up to time $s$. Thus, independently of $\mathcal{F}_s$, the first passage time of $\|B_{t+s}\| - \|B_s\|$ to $r$ is stochastically larger than $T_r$.\\

Now consier $\|\{B^{(j)}\}_t\|$, $j = 1, \ldots, d$. Given $\mathcal{F}_s$, which denotes the history of all $j$ processes up to time $s$, the conditional distribution of each $\|B^{(j)}_{t+s} - B^{(j)}_s\|$ is stochastically larger than each of the $\|\widetilde{B^{(j)}_t}\|$ for all $t \geq 0$. It follows that the first passage time of $Z_t := \max_{1 \leq j \leq d}\|B^{(j)}_t\|$ from $kr$ to $(k+1)r$ is stochastically larger than the minimum of $d$ random variables, each distributed as $T_r$. Hence $T^*_{(k+1)r}- T^*_{kr} \geq_{st} T^*_r$ for all $k \geq 1$. It further follows that, $T^*_{kr}$ is stochastically larger than the convolution of $n$ iid random variables, each distributed as $T^*_r$, where $T^*_r = \min_{1 \leq j \leq d}T_r^{(j)} (= T_{r,d}$).\\

Letting $\widetilde{N}_t = \max\{k: T_{kr}^* \leq t\}, N(t) = \max\{k: \sum^k_{i=1}T^*_{r,i} \leq t\}$, we can write
\begin{eqnarray*}
E\widetilde{N}_t \leq E[N(t)] \leq \left(\frac{t}{E[T_{r,d}]}+1\right)r,
\end{eqnarray*}
since $T^*_r = T_{r,d}$ is NBU. Next, denote $Y_d =_d \max_{1 \leq \nu \leq d}\{\chi^2_{3,\nu}\}$, we have
\begin{equation*}
\sqrt{t}E[\sqrt{Y_d}] = E\left[\max_{1\leq j\leq d}\|B^{(j)}_t\|\right] \leq r\left(\frac{t}{E[T_r^*]}+1\right).
\end{equation*}
It follows that $\sqrt{E[T_r]}E[\sqrt{Y_d}] \leq 2r$ [by letting $t = E[T_{r,d}]$ and $\sqrt{E[T_r]} \leq 2r(E\sqrt{Y_d})$]. As a result, we have
\begin{equation}
E[T_r] \leq \frac{4r^2}{[E\sqrt{Y_d}]^2}. 
\end{equation}
The two-sided bound in this case is thus
\begin{equation*}
\frac{r^2}{4[E\sqrt{Y_d}]^2} \leq E[T_{r,d}] \leq \frac{4r^2}{[E\sqrt{Y_d}]^2},
\end{equation*}
or equivalently,
\begin{equation}
a^{-1}(r/2) \leq E[T_{r,d}] \leq a^{-1}(2r),
\end{equation}
which is the same as what we obtained in eq. \eqref{eq:mainresult2}. Note that, in this example, the underlying process $Z_t = \max_{1 \leq j \leq d}\|B^{(j)}_t\|$ is not Markov. However, each $\|B^{(j)}_t\|$ is Markov and so $T^{(r)}_r$ is NBU. It follows that $T^*_r = \min_{1 \leq j \leq d}T_r^{(j)}$ is NBU and thus has a finite mean.\\

Finally, we would like to emphasize that since the constants involved in the bounds derived are independent of the size of $d$, the inequalities obtained can be used to derive quantitative comparisons on the expected first passage times for processes with different values of $d$. That is, if we include the dependence on $d$ for different values, say $d_1$ and $d_2$, and take ratios, we have 
\begin{equation*}
E[T_{2, r, d_2}^p] \approx \frac{E[\sup_{i \leq d_1}|B_1^{(i)}|^p]}{E[\sup_{i \leq d_2}|B_2^{(i)}|^p]}E[T_{1, r, d_1}^p],
\end{equation*}
which gives us information on the relative growth rate between the maxima of Bessel processes.
\end{ex}
All the examples shown in this section involve Brownian motion, below shown is an example that demonstrate how the bounds can be applied to other types of random variables whose distribution is not Gaussian. 
\begin{ex}
Let $X_1, X_2, \ldots, X_n$ be non-negative, possibly dependent random variables with $\Pr\{X_i = X_j\} = 0$ for all $i \neq j$. Let $T_r$ denote the $r$th smallest amongst $X_1, \ldots, X_n$, $F_i$ be the marginal CDF of $X_i$ and $N(t) = \{\# X_i \leq t\}$. Then, 
\begin{equation*}
a(t) = E[N(t)] = \sum^n_{i=1}F_i(t)
\end{equation*}
and
\begin{equation*}
E[a(T_r)] = E\left[\sum^n_{i=1}F_i(T_r)\right] \geq \frac{r}{2}.
\end{equation*}
Thus, $N(T_r) \equiv r$ and 
\begin{equation*}
E[a(T_r)] = E[\widetilde{N}(T_r)] \geq \frac{r}{2},
\end{equation*}
illustrating the decoupling aspect of this inequality. Again, $\widetilde{N}(t) = \{\# \widetilde{X}_i \leq t \}$, where $\{\widetilde{X}_i\}_{i = 1, \ldots, n}$ are independent copies of $\{X_i\}_{i = 1, \ldots, n}$ It should be noted that this lower bound is obtained without the knowledge of the dependence structure of $X$'s.\\

The above setting can be used to model a pool of debtors whose survival times (time until which they become default) follow some distribution with non-increasing density function, say exponential distribution or a subset of Weibull distribution. In this case, $T_r$ can be interpreted as the time when $r$ (out of $n$) debtors have gone bankrupt, which can be an important time stamp that triggers the termination of payment to a lower tranch of collateralized debt obligation (CDO). By assuming stationarity, we can treat the data as two sets of independent copies and use the historical data to estimate the current set of individuals (random variables). Unlike the use of copula to model the event time, the results presented previously can provide bounds for the event time without knowing the dependence structure of the debtors. 
\end{ex}

\section{Conclusion}
In this paper, we derive bounds for the expectation of the stopping time of arbitrary stochastic processes. The approach we use (see \cite{delapena97} and \cite{delapenayang}) involves the concept of boundary crossing of a non-random function to that of a random function. In the situations where the moment of the maximal process is available, the results shown can be helpful for the estimation of $E[T_r]$. In particular, for non-negative, continuous and time homogenuous Markovian processes, with the assumption that $T_{kr} - T_{(k-1)r} \geq_{st} T_r$ for $k \geq 2$, we show that the order of magnitude is the same up to constants. The result of \eqref{eq:rightbound} suggests that it is appropriate to view the process through $a(t)$, which serves as a clock for all processes with the same $a(t)$, as reflected in \eqref{eq:rate}. The lower bound derived can be applied to arbitray measurable processes and it is particularly useful in the study of renewal processes. \\


\bibliographystyle{spmpsci}

\begin{thebibliography}{9}
\bibitem{barlowetall72}
{\sc Barlow, R.E., Bartholomew, D.J., Bremner, J.M. and Brunk, H.D.} (1972). \textit{Statistical inference under order restrictions: Theory and application of isotonic regression.} John Wiley and Sons.

\bibitem{barlowproschan75}
{\sc Barlow, R.E. and Proschan, F.} (1975). \textit{Statistical theory of reliability and life testing probability models.} Holt, Rinehart \& Winston, New York.


\bibitem{brown06}
{\sc Brown, M.} (2006). Exploiting the waiting time paradox: Applications of the size-biasing transformation. \textit{Prob. Eng. Inform. Sc}. \textbf{20}. 195-230.

\bibitem{brownetal11}
{\sc Brown, M., de la Pe\~na, V.H. and Sit, T.} (2011). On estimating threshold crossing times. Unpublished manuscript.

\bibitem{burkholdergundy70}
{\sc Burkholder, D.L. and Gundy, R.F.} (1970). Extrapolation and interpolation of quasi-linear operators on martingales. \textit{Acta Math.} \textbf{124}. 249-304.

\bibitem{delapena96} 
{\sc de la Pe\~na, V.H.} (1996). On Wald's equation and first exit times for randomly stopped processes with independent increments. Proceedings of conference Probability on Higher Dimensions in Progress in Probability, \textbf{43}, 277-286.

\bibitem{delapena97}
 {\sc de la Pe\~na, V.H.} (1997). From boundary crossing of nonrandom functions to first passage times of processes with independent increments. Unpublished manuscript.

\bibitem{delapenaetal11}
 {\sc de la Pe\~na, V.H., Brown, M., Kushnir, Y., Ravindarath, A. and Sit, T.} (2011). On a new approach for estimating threshold crossing times with an application to global warming. Proceedings of the 2011 New York Workshop on Computer, Earth and Space Science, 8-12. Editors M. J. Wey and C. Naud. http://giss.nasa.gov/meetings/cess2011, arXiv:1104.1580v2

\bibitem{delapenaeisenbaum94}
 {\sc de la Pe\~na, V.H. and Eisenbaum, N.} (1994). Decoupling inequalities for the local times of linear Brownian motion. Unpublished manuscript.

\bibitem{delapenagine} 
 {\sc de la Pe\~na, V.H. and Gin\'e, E.} (1999). \textit{Decoupling - From dependence to independence}. Springer.

\bibitem{delapenagov} 
 {\sc de la Pe\~na, V. and Govindarajulu, Z.} (1992). A note on a second moment of a randomly stopped sum of independent variables. \textit{Statistics and Probability Letters}. \textbf{14}. 275-281.

\bibitem{delapenayang}  
 {\sc de la Pe\~na, V.H. and Yang, M.} (2004). Bounding the first passage time on an average. \textit{Stat. Probabil. Lett}. \textbf{67}. 1-7.

\bibitem{klass88}  
 {\sc Klass, M.} (1988). A best possible improvement of Wald's equation: functions of sums of independent random variables. \textit{Ann. Probab}. \textbf{16}. 413-428.

\bibitem{klass90}  
 {\sc Klass, M.} (1990). Uniform lower bounds for randomly stopped Banach space-valued random sums. \textit{Ann. Probab}. \textbf{18}. 790-809.

\bibitem{lai01}  
 {\sc Lai, T.L.} (2001).
Sequential analysis: Some classical problems and new challenges (with discussion and rejoinder). Statistica Sinica. \textbf{11}. Celebrating the New Millennium: Editors’ Invited Article I. 303–408.


\bibitem{rogerwilliams}  
  {\sc Rogers, L. and Williams, D.} (2000). \textit{Diffusions, Markov Processes and Martingales, 2nd Ed}: Cambridge University Press.

\bibitem{ross96}  
  {\sc Ross, S.M.} (1996). \textit{Stochastic Processes, 2nd Ed}: John Wiley and Sons, NY.

\bibitem{sato99}
  {\sc Sato, K.} (1999). \textit{L\'evy Processes and Infinitely Divisible Distributions}: Cambridge Studies in Advanced Mathematics.

\bibitem{sehadri94}
  {\sc Sehadri, V.} (1994). \textit{The Inverse Gaussian Distribution: A Case Study in Exponential Families}: Oxford University Press.

\bibitem{wald45}
  {\sc Wald, A.} (1945). ``Sequential Tests of Statistical Hypotheses.'' \textit{Ann. of Math. Stat}. \textbf{16}. 117–186.

\bibitem{yang}
  {\sc Yang, M.} (2002). Occupation times and beyond. \textit{Stochastic Process. Appl.}. \textbf{97}. 77-93.

\end{thebibliography}

\end{document}